\DeclarePairedDelimiter\abs{\lvert}{\rvert}%
\DeclareMathOperator{\height}{height}
\DeclareMathOperator{\cluster}{cluster}
\DeclareMathOperator{\tree}{tree}
\pgfplotsset{compat=1.13}
\begin{document}
\mainmatter          

\title{Implementation of the algorithm \\ for testing an automaton for synchronization \\ in linear expected time}
\titlerunning{Testing synchronization in linear expected time}

\author{Pavel Ageev\thanks{The author acknowledges support by the Competitiveness Enhancement Program of Ural Federal University.}}
\authorrunning{Pavel Ageev}
\tocauthor{Pavel Ageev}

\institute{Institute of Natural Sciences and Mathematics\\
Ural Federal University, Lenina 51, 620000 Yekaterinburg, Russia,\\
\email{pavel.ageev@urfu.ru}}

\maketitle
\begin{abstract}
Berlinkov has suggested an algorithm that, given a deterministic finite automaton $\mathcal{A}$, verifies whether or not $\mathcal{A}$ is synchronizing in linear (of the number of states and letters) expected time. We present a modification of Berlinkov's algorithm which we have implemented and tested. Our experiments show that the implementation outperforms the standard quadratic algorithm even for automata of modest size and allow us to give a statistically accurate approximation of the ratio of non-synchronizing automata amongst all automata with a given number of states.
\end{abstract}

\section{Background and Motivation}

A (deterministic finite) \emph{automaton} (DFA, for short) is a triple $\mathcal{A}=(Q, \Sigma, \delta)$, where $Q$ is a finite set of \emph{states}, $\Sigma$ stands for a finite \emph{alphabet} and $\delta\colon Q\times\Sigma \rightarrow Q$ is a \emph{transition function}. Let $\Sigma^*$ be the set of all words over $\Sigma$, including the empty word $\varepsilon$. Each word $w\in\Sigma^*$ acts on $Q$ via $\delta$: namely, for each state $q\in Q$, we let\footnote{Here and throughout expressions like $A := B$ mean that $A$ is defined to be $B$.}
\[
q.w:=\begin{cases}
q &\text{if}\ w=\varepsilon,\\
\delta(q.v,a) &\text{if}\ w=va\ \text{for some $v\in\Sigma^*$ and $a\in\Sigma$}.
\end{cases}
\]
This action extends to subsets of $Q$: for $D\subseteq Q$, we define $D.a:=\{q.a \mid q \in Q\}$.

An automaton $\mathcal{A}=(Q, \Sigma, \delta)$ is called \emph{synchronizing} if there exists a word $w \in \Sigma^*$ whose action is a constant function, i.e., $q.w = p.w$ for all $p,q \in Q$. Any word with such a property is called \emph{synchronizing}.

Automata serve to model real-world devices or protocols functioning in discrete mode. Often, such a device has to work within error-prone environment, and the property of being synchronizing may allow one to restore control over the device even if its current state has become unknown due to an error. We refer to the survey \cite{Volkov:2008} for a discussion and some illustrative examples. Therefore the natural question of how to determine whether or not a given DFA is synchronizing is of some importance. A well-known polynomial algorithm solving this question is based on the following observation.
\begin{proposition}
\label{prop:quadratic}
An automaton $\mathcal{A}=(Q, \Sigma, \delta)$ is synchronizing if and only if for each pair of states $p,q \in Q$, there exists a word $w\in\Sigma^*$ such that $p.w = q.w$.
\end{proposition}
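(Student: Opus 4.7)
The forward direction is immediate: if $w$ is synchronizing then by definition $p.w = q.w$ for all pairs $p, q$, in particular for any single chosen pair.

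For the converse, the plan is to prove by induction on $|D|$ the stronger statement that for every nonempty subset $D \subseteq Q$ there exists a word $w \in \Sigma^*$ with $|D.w| = 1$; applying this to $D = Q$ yields a synchronizing word. The base case $|D| = 1$ is handled by $w = \varepsilon$.

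For the inductive step, assume $|D| \geq 2$ and pick any two distinct states $p, q \in D$. By hypothesis, there is a word $v \in \Sigma^*$ with $p.v = q.v$. Since the action of $v$ is a function, $|D.v| \leq |D|$, and because the two distinct elements $p, q$ are sent to the same image, the inequality is strict: $|D.v| < |D|$. The induction hypothesis applied to $D.v$ produces a word $u$ with $|(D.v).u| = 1$; the word $w := vu$ then satisfies $|D.w| = 1$, completing the induction.

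There is no real obstacle here; the only subtlety worth double-checking is that the definition of the action on subsets, combined with distinctness of $p$ and $q$, genuinely forces $|D.v| < |D|$, and that the induction is on the size of the subset rather than on anything involving word length (which a priori could be unbounded). The argument also implicitly gives a crude bound: a synchronizing word can be obtained as a concatenation of at most $|Q| - 1$ pairwise-merging words, which is the basis of the classical quadratic algorithm alluded to after the proposition.
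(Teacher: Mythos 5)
Your proof is correct, but there is nothing in the paper to compare it against: the paper does not prove Proposition~\ref{prop:quadratic}, it simply cites the result to \v{C}ern\'y (1964, Theorem~2) and to Liu's 1963 thesis (Theorem~15). Your argument — a downward induction on $|D|$ showing every nonempty $D\subseteq Q$ can be collapsed to a singleton, with the strict drop $|D.v|<|D|$ coming from non-injectivity once a merging pair is found — is the standard textbook proof, and the closing observation (at most $|Q|-1$ pairwise merges, each obtainable by a search on the pair graph) is exactly the basis for the quadratic algorithm \emph{SynchSlow} that the paper describes immediately after the statement. One incidental point: the paper's definition of the action on subsets, $D.a:=\{q.a \mid q\in Q\}$, has a typo (it should be $q\in D$); your proof implicitly uses the intended definition, which is of course what makes $|D.v|\le|D|$ hold.
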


It seems that this result first appeared in print in \v{C}ern\'y's pioneering paper \cite{Cerny:1964}, see Theorem~2 there. Independently
and somewhat earlier, it was obtained in Chung Laung Liu's PhD thesis~\cite{Liu:1963}, see Theorem~15 therein.

Given an automaton $\mathcal{A}=(Q, \Sigma, \delta)$, one can build the directed graph $\Gamma(\mathcal{A})$ whose vertices are all the ordered pairs of elements of $Q$ and whose edges are the pairs $((p_1,p_2),(q_1,q_2))$ such that there exists $a\in\Sigma$ with $p_i=\delta(q_i,a)$, $i=1,2$. The condition of Proposition~\ref{prop:quadratic} can be restated as follows: $\mathcal{A}$ is synchronizing if and only if each ordered pair of states is reachable in $\Gamma(\mathcal{A})$ by a directed path from a pair with equal entries. The latter condition can be checked by breadth-first search (BFS). If $|Q|=n$ and $|\Sigma|=k$, the graph $\Gamma(\mathcal{A})$ has $\tfrac{n(n+1)}2$ vertices and $\tfrac{kn(n+1)}2$ edges whence BFS in this graph requires $O(kn^2)$ time. Thus, we have a quadratic (in the number of states) algorithm to check the synchronizability of a given DFA. In the sequel, we refer to this algorithm as \emph{SynchSlow}. The algorithm is conceptually very simple but it may be inefficient already for automata with several thousand states.

For a clear definition of the random model of automata, which will be needed further, let for every state $q$ and a letter $a$, $\delta(q,a)$ is chosen uniformly at random from $Q$.

Berlinkov in his studies on synchronization of random automata~\cite{Berlinkov:preprint,berl} has suggested an algorithm that verifies whether or not a given DFA is synchronizing in linear expected time. Roughly speaking, this algorithm consists in verifying in a given automaton $\mathcal{A}$ a sequence of conditions each of which has two features:
\begin{itemize}
\item[(F1)] in every DFA with $n$ states, the condition can be checked in time $O(n)$;
\item[(F2)] the fraction of automata with $n$ states that do not satisfy the condition amongst all automata with $n$ states is $O(\frac1n)$.
 \end{itemize}
If $\mathcal{A}$ satisfies all these conditions, it is definitely synchronizing. If some of the conditions fails, the automaton may be synchronizing and may be not, and therefore, \emph{SynchSlow} should be called to get a definite answer. Thus, Berlinkov's algorithm may take quadratic time in the worst case. However, since the fraction of automata with $n$ states for which one needs invoking \emph{SynchSlow} is $O(\frac1n)$, the expected time that the described procedure spends when verifying the synchronizability of a given DFA with $n$ states will be $O(n)$.

In this paper we report our implementation of a (slight modification of) Berlinkov's algorithm and present some results of computational experiments. The experiments demonstrate that the implementation outperforms \emph{SynchSlow} for automata with more than 35 states. We use the experimental results to estimate the ratio of non-synchronizing automata amongst all automata with a given number of states.

\section{Description of the Algorithm}

Due to the space constraints, we do not reproduce separately the original version of Ber\-lin\-kov's algorithm as it may be found in \cite[Section~2]{berl} and (with more detail) in \cite[Section~4]{Berlinkov:preprint}\footnote{The latter source includes also some preliminary data of our early experiments.}. Instead we describe the implemented version of the algorithm. It basically follows the pattern of the original version, with a few modifications each of which will be explicitly specified. We have chosen to present the algorithm as a sequence of steps whose descriptions are interwoven with estimations of their running time and less formal comments.

We first consider the case of 2-letter alphabet. Thus, let $\mathcal{A}=(Q, \Sigma, \delta)$  be a DFA with $|\Sigma|=2$. We denote by $n$ the number of states in $\mathcal{A}$.

\smallskip

\textbf{Step 1.} Let $U\!G(\mathcal{A})$ stand for the \emph{underlying graph} of $\mathcal{A}$, that is, the directed graph with the vertex set $Q$ and the edge set consisting of all pairs $(q,p)\in Q\times Q$ such that $p=\delta(q,a)$ for some $a \in \Sigma$. The algorithm starts with finding the strongly connected components of $U\!G(\mathcal{A})$. This can be done in $O(n)$ time by Tarjan's algorithm \cite{Tarjan:72}. The reachability relation in $U\!G(\mathcal{A})$ induces a partial order on the set of strongly connected components. If $U\!G(\mathcal{A})$ has more than one strongly connected component which is minimal with respect to this order, then no state of $\mathcal{A}$ can be reached from every other state and the automaton is not synchronizing. The algorithm then returns \emph{``false''}.

For the rest of the description, we assume that the graph $U\!G(\mathcal{A})$ contains a unique minimal strongly connected component. Let $Q_0$ stand for the set of vertices of this component. Here the original version of the algorithm branches, depending on the size of $Q_0$: one proceeds if $|Q_0|\ge n/(4e^2)$ (where $e$ stands for the base of the natural logarithm); otherwise the algorithm \emph{SynchSlow} is called. We omit this check and proceed independently of the size of $Q_0$.

\begin{remark}
\label{rem:scc}
We would like to briefly discuss a subtlety that arises here. Let $\mathcal{A}_0$ denote the subautomaton of $\mathcal{A}$ obtained by restricting the transition function $\delta$ to the set $Q_0\times\Sigma$. Then it is easy to see that $\mathcal{A}$ is synchronizing if and only if so is $\mathcal{A}_0$. This fact may suggest that on this stage of the algorithm it is reasonable to start working with the smaller subautomaton $\mathcal{A}_0$ rather than the whole automaton $\mathcal{A}$.

The problem is that the initial automaton $\mathcal{A}$ is arbitrary whence, for each $a\in\Sigma$, the map $\delta(\_,a)\colon Q\to Q$ can be treated as a random map on the set $Q$. Some properties of such random maps are crucial for verifying that the conditions involved in Berlinkov's algorithm satisfy the feature (F2), that is, they hold in an overwhelming majority of automata. However, the restrictions of the maps $\delta(\_,a)$ to $Q_0$ cannot be treated as uniformly random since the subautomaton $\mathcal{A}_0$ is already not arbitrary (in particular, it is strongly connected).
\end{remark}

\textbf{Step 2.} For each letter $a\in\Sigma$, we denote by $U\!G(a)$ the underlying graph of the automaton
$(Q,\{a\},\delta|_{Q\times\{a\}})$. The graph $U\!G(a)$ consists of one or more (weakly) connected components, which will be referred to as \emph{clusters} hereafter. Every cluster consists of a single cycle (which may degenerate to a loop) and several trees whose roots lie on the cycle. For a state $q\in Q$, we denote by $\height(q)$ its \emph{height} with respect to a fixed letter $a\in\Sigma$, that is, the least non-negative integer $\ell$ such that $q.a^\ell=q.a^m$ for some $m>\ell$. Thus $q.a^{\height(q)}$ lies on some cycle of $U\!G(a)$ and serves as the root of the tree to which $q$ belongs. Fig.~\ref{fig:cluster} illustrates the notions just introduced.
\begin{figure}[th]
\begin{center}
\unitlength=0.75mm
\begin{picture}(110,55)
\node(A)(25,40){0}
\node(B)(49,26){0} \node(C)(37.5,5){0} \node(D)(12.5,5){0}
\node(E)(1,26){0} \drawedge(A,B){} \drawedge(B,C){}
\drawedge(C,D){} \drawedge(D,E){} \drawedge(E,A){}
\node(F)(70,26){1} \node(G)(88,37){2}
\node(I)(88,15){2} \drawedge(F,B){} \drawedge(G,F){}
\drawedge(I,F){} \node(J)(58,5){1}
\drawedge(J,C){} \node(K)(5,45){1} \node(L)(50,45){1}
\drawedge(K,A){} \drawedge(L,B){} \node(M)(70,45){2}
\drawedge(M,L){} \node(N)(88,56){3} \drawedge(N,M){}
\end{picture}
\end{center}\caption{A typical cluster with heights of its vertices shown}
\label{fig:cluster}
\end{figure}
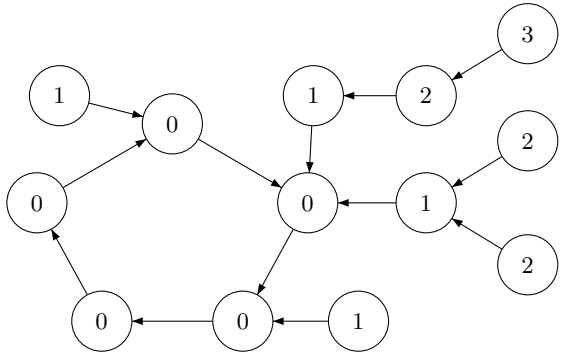

The algorithm proceeds by creating the data structure \emph{ClusterStructure} that, for each letter $a\in\Sigma$, contains:
\begin{itemize}
\item a list of indices of clusters;
\item for each index in the list, the size of the corresponding cluster, the length of its cycle, and a list of vertices that lie on the cycle (the latter list naturally indexes the trees of the cluster)
\item for each vertex $q\in Q$, the indices $\cluster(q)$ of the cluster and $\tree(q)$ of the tree to which $q$ belongs, and $\height(q)$.
\end{itemize}
\emph{ClusterStructure} can be built in $O(n)$ time, see~\cite[Lemma~9]{Berlinkov:preprint}.

Here the algorithm  branches, depending on the number of clusters: one proceeds if, for each $a\in\Sigma$, the number of clusters does not exceed $5\ln n$; otherwise \emph{SynchSlow} is called. By~\cite[Lemma~2]{Berlinkov:preprint}, the probability for \emph{SynchSlow} to be invoked at this step is $o(\frac{1}{n^4})$. Thus, for the rest of the description, we assume that
each graph $U\!G(a)$ has at most $5\ln n$ clusters.

\smallskip

\textbf{Step 3.} A 1-\emph{branch} is a subtree of one of the trees in a cluster of $U\!G(a)$ such that the root of this subtree has height~1. The height of a 1-branch is the maximum height of its vertices. For illustration, the cluster in Fig.~\ref{fig:cluster} has four 1-branches of which two have height~1 while two others have height~2 and~3.

Our algorithm checks if at least one letter $a\in\Sigma$ is such that $U\!G(a)$ has a unique 1-branch of maximum height (the \emph{tallest} 1-branch). It has been shown by Berlinkov~\cite[Theorem~3]{Berlinkov:tree} that the probability that a graph of the form $U\!G(a)$ (that is, the graph of a random map) has more than one 1-branch of maximum height is $O(\frac1{\sqrt{n}})$. If this happens for both letters in $\Sigma$, which event has the probability $O(\frac1n)$, \emph{SynchSlow} is called. Otherwise, we proceed, assuming that the tallest 1-branch (denoted $T$ in the sequel) exists for one of the letters in $\Sigma$; we denote this letter by $a_1$ and the other letter by $a_2$.

\smallskip

\textbf{Step 4.} On this step, which is specific for our modification, we check whether or not there exists a state $q\in Q_0\cap T$ such that the height of $q$ exceeds the height of any other 1-branch of $U\!G(a_1)$. (Recall that $Q_0$ denotes the set of vertices of the unique minimal strongly connected component of the graph $U\!G(\mathcal{A})$.) If this property fails, which happens with probability $O(\frac1n)$ by \cite[Theorem~6]{Berlinkov:preprint}, we call \emph{SynchSlow}. If it holds, we find the root $r$ of $T$ and the state $p$ that lies on the cycle of $\cluster(r)$ and is such that $p.a_1$ is the root of $\tree(r)$, see Fig.~\ref{fig:stable}. This step takes $O(n)$ time because we can calculate the height of each vertex end each 1-branch using single DFS. We also can verify if a certain state is in $q\in Q_0\cap T$ in constant time using precalculated indicator function for each of these sets.
\begin{figure}[th]
\begin{center}
\unitlength=0.75mm
\begin{picture}(110,55)
\node(A)(25,40){$p$}
\node(B)(49,26){} \node(C)(37.5,5){} \node(D)(12.5,5){}
\node(E)(1,26){} \drawedge(A,B){} \drawedge(B,C){}
\drawedge(C,D){} \drawedge(D,E){} \drawedge(E,A){}
\node(F)(70,26){} \node(G)(88,37){}
\node(I)(88,15){} \drawedge(F,B){} \drawedge(G,F){}
\drawedge(I,F){} \node(J)(58,5){}
\drawedge(J,C){} \node(K)(5,45){} \node(L)(50,45){$r$}
\drawedge(K,A){} \drawedge(L,B){} \node(M)(70,45){}
\drawedge(M,L){} \node(N)(88,56){} \drawedge(N,M){}
\end{picture}
\end{center}\caption{States forming a stable pair}
\label{fig:stable}
\end{figure}
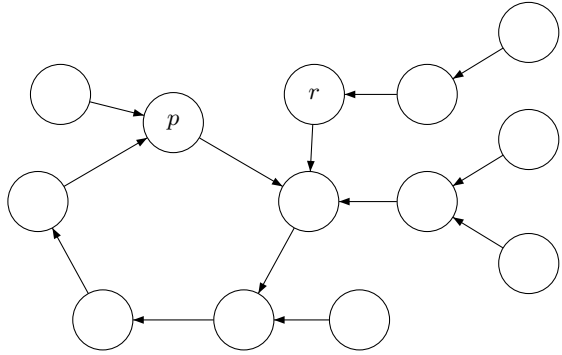

\textbf{Step 5.} It follows from \cite[Theorem~3]{Berlinkov:preprint} that the pair $\{p,r\}$ is \emph{stable} in the sense of~\cite{CKK:2002}. Recall that a pair of distinct states $\{q_1,q_2\}\in Q\times Q$ is called \emph{stable} in $\mathcal{A}=(Q, \Sigma, \delta)$ if, for each $w \in \Sigma^*$, there exists a word $v \in \Sigma^*$ such that $q_1.wv = q_2.wv$. It is known and easy to see that if $\{q_1,q_2\}$ is a stable pair and $w$ is an arbitrary word in $\Sigma^*$, then either $q_1.w = q_2.w$ or $\{q_1.w,q_2.w\}$ is a stable pair again. Using this, we `multiply' the stable pair $\{p,r\}$ as follows. First we construct 6 pairs $\{p_i,r_i\}:=\{p.a_2^i, r.a_2^i\}$, $i=1, 2, \dots, 6$.  Then for each $i=1,2,\dots,6$, consider the pairs $\{p_{ij},r_{ij}\}:=\{p_i.a_1^j, r_i.a_1^j\}$ where $j=1,2,\dots,\lceil n^{0.4}\rceil$. As discussed, all constructed pairs, except those whose entries coincide, are stable. We denote the set of stable pairs of the form $\{p_{ij},r_{ij}\}$ by $Z(a_2)$. Clearly, the set $Z(a_2)$ can be built in $O(n^{0.4})$ time. If $|Z(a_2)|<6$, we call \emph{SynchSlow}. The results of~\cite[Lemmas~7 and~8]{Berlinkov:preprint} ensure that this will happen with probability $O(\frac1n)$. If $|Z(a_2)|\ge6$, we take the 6 first pairs in $Z(a_2)$, and `multiply' them in the same fashion, that is, we act on each of these pairs by the words $a_2^j$ for $j=1,2,\dots,\lceil n^{0.4}\rceil$.  Then we select the stable pairs amongst the $6\lceil n^{0.4}\rceil$ pairs we get and denote the resulting set of stable pairs by $Z(a_1)$. Constructing $Z(a_1)$ also takes $O(n^{0.4})$ time.

Here the original version of Berlinkov's algorithm branches, depending on the sizes of the sets $Z(a_1)$ and $Z(a_2)$: one proceeds if each of these sets contains at least $\lceil n^{0.4} \rceil$ pairs; otherwise the algorithm \emph{SynchSlow} is called. In our modification we omit this check.

\smallskip

\textbf{Step 6.} Recall that we broke the symmetry of letters in Step~3. However, in the remaining steps of the algorithm, the difference between the letters $a_1$ and $a_2$ plays no role, and we use $a$ to denote any of this letter. A cluster of the graph $U\!G(a)$ is said to be \emph{large} if its size is greater than $n^{0.45}$. We denote by $L_a$ the set of all large clusters of the graph $U\!G(a)$ and consider the (undirected) graph $\Gamma_a$ with the vertex set $L_a$ whose edges are determined by the stable pairs in $Z(a)$ as follows: there is an edge between clusters $c,c'\in L$ whenever there exists a stable pair $\{p, q\} \in Z(a)$ such that $p \in c$ and $q\in c'$. Observe that by construction, $\Gamma_a$ may have loops and multiple edges. Our next step, which is specific for our modification, is to check whether or not the graph $\Gamma_a$ is connected for each $a\in\Sigma$. This can be done in sublinear time since $|L_a|\le 5\ln n$ and $|Z(a)|\le 6\lceil n^{0.4}\rceil$. By \cite[Lemma~3]{Berlinkov:preprint} the connectivity check fails with probability $O(\frac1n)$, and in this case \emph{SynchSlow} is called.

If the graph $\Gamma_a$ is connected, we proceed by computing the greatest common divisor $d$ of the lengths of cycles of the clusters in $L_a$. Using the Euclidean algorithm, one can find $d$ in $O(\ln^2n)$ time. If $d=1$ for each $a\in\Sigma$, we jump to the next step; if $d>1$ for some $a\in\Sigma$, an additional check is needed. Fix a spanning tree $\Theta$ of the graph $\Gamma_a$. (In the actual implementation, we build $\Theta$ when checking the connectivity of $\Gamma_a$.) Using $\Theta$, we label the vertices of $\Gamma_a$ by residues modulo $d$ as follows. The root of $\Theta$ gets label 0. Now suppose that some cluster $c\in L_a$ has already been labelled by $\ell(c)\in\{0,1,\dots,d-1\}$ while its child $c'$ in $\Theta$ has not yet got a label. We fix a stable pair $\{p,q\}\in Z(a)$ such that $p \in c$ and $q\in c'$ and define the label $\ell(c')$ from the congruence
\[
\ell(c')-\ell(c)\equiv \height(q) - \height(p)\pmod{d}\enspace.
\]
When the labelling process has been completed, we check whether the congruence
\begin{equation}
\label{eq:cong}
\ell(\cluster(s))-\ell(\cluster(t))\equiv \height(s) - \height(t)\pmod{d}
\end{equation}
holds true for every other stable pair $\{s,t\}\in Z(a)$ defining an edge in $\Gamma_a$. Since $|Z(a)|\le 6\lceil n^{0.4}\rceil$, this can be done in sublinear time. If follows from \cite[Lemma~4]{Berlinkov:preprint} that the probability that all congruences of the form~\eqref{eq:cong} simultaneously hold is $O(\frac1n)$, and in this case we call \emph{SynchSlow}. Otherwise we proceed
to the final step of the algorithm (that coincides with the final step of the original version).

\begin{remark}
\label{rem:logic}
This seems to be an appropriate place for a comment on the actual role of the conditions verified in Step 6 and on the overall logic of the algorithm. Let $\widehat{L_a}$ stand for the set of all states of $\mathcal{A}$ that belong to large clusters of the graph $U\!G(a)$.  It can be shown that if the graph $\Gamma_a$ is connected and either $d=1$ or $d>1$ but some of the congruences~\eqref{eq:cong} fails, then for each pair of states $p,q\in\widehat{L_a}$, there exists a word $w\in\Sigma^*$ such that $p.w = q.w$, see \cite[Lemmas~3 and~4]{Berlinkov:preprint}. In view of Proposition~\ref{prop:quadratic}, it remains to exhibit some additional conditions that hold with high probability and ensure the same conclusion for all pairs of different states with at least one entry lying beyond $\widehat{L_a}$. Exactly this is going to be done in Step~7.
\end{remark}

\textbf{Step 7.} Conditions to be described here involve both letters $a_1,a_2\in\Sigma$. In what follows, let $a$ denote any of these letters and let $b$ stand for the other one. Thus, each of the following items in fact represents \emph{two} conditions: one with $a=a_1$, $b=a_2$ and one with $a=a_2$, $b=a_1$.

\smallskip

\textbf{Step 7.1.} For each cycle of size $s>2$ in $U\!G(a)$, we find the number of its states belonging to the set $\widehat{L_b}$. If this number is at least $\lceil\frac{s}2\rceil$, i.e., the majority of the states are in $\widehat{L_b}$, we proceed; otherwise we call \emph{SynchSlow}. By \cite[Theorem~2, Case~1]{Berlinkov:preprint}, the probability of the latter event is $O(\frac1n)$, and clearly, the data collected in \emph{ClusterStructure} allow us to complete all verifications in this step in $O(n)$ time.

\smallskip

\textbf{Step 7.2.} For each cycle $C$ of size 2 in $U\!G(a)$ such that either $C\nsubseteq\widehat{L_b}$, we build the sets $C.b$ and $C.b^2$. We proceed if at least one of these sets is a singleton. If $|C.b|=|C.b^2|=2$, we proceed whenever $|C.b\cup C.b^2|=3$ and $C.b\subseteq\widehat{L_b}$ or $|C.b\cup C.b^2|=4$ and either $C.b\subseteq\widehat{L_b}$ or $C.b^2\subseteq\widehat{L_b}$. In all other cases we call \emph{SynchSlow}. By \cite[Remark~1]{Berlinkov:preprint}, the probability of invoking \emph{SynchSlow} at this step is $O(\frac1n)$, and again, all verifications we need clearly can be done in $O(n)$ time.

\smallskip

\textbf{Step 7.3.} For each cycle $C$ of $U\!G(a)$, we check whether or not the inclusions $C\subseteq\widehat{L_b}$ and $C.b\subseteq\widehat{L_a}$ hold true and store this information. This can be done in $O(n)$ time. Now we consider all pairs of different cycles $C,C'$ of $U\!G(a)$. Let $|C|=s$, $|C'|=s'$. We may assume that $s\ge s'$. If $s'\ge n^{0.45}$ (which implies that both $C$ and $C'$ belong to large clusters), we proceed to the next pair of cycles. If $s'=1$, i.e., the cycle $C'$ is a loop, we proceed to the next pair of cycles provided that either $C,C'\subseteq\widehat{L_b}$ or $C.b,C'.b\subseteq\widehat{L_a}$; otherwise we call \emph{SynchSlow}. Due to \cite[Theorem~2, Case~2]{Berlinkov:preprint}, the probability of the second alternative is $O(\frac1n)$.

If $1<s'\le n^{0.45}$, we compute the greatest common divisor $d$ of $s$ and $s'$ using the Euclidian algorithm that requires $O(\ln n)$ time. Let $C=\{p_0,p_1,\dots,p_{s-1}\}$ and $C'=\{q_0,q_1,\dots,q_{s'-1}\}$, where the states are listed in the order induced by the action of the letter $a$, that is, $p_i.a=p_{i+1\!\pmod{s}}$ for $i=0,1,\dots,s-1$, and $q_j.a=q_{j+1\!\pmod{s'}}$ for $j=0,1,\dots,s'-1$. Denote by $\mathbb{Z}_d$ the additive group of residues modulo $d$ and consider two subsets in this group:
\begin{gather*}
I:=\{i\mid\text{ for all } k\ge0,\ p_{i+kd\!\!\!\pmod{s}}\notin\widehat{L_b}\}\\
J:=\{j\mid\text{ for all } k\ge0,\ q_{j+kd\!\!\!\pmod{s'}}\notin\widehat{L_b}\}\enspace.
\end{gather*}
Then we check whether or not there is a `shift' $z\in\mathbb{Z}_d$ such that
\begin{equation}
\label{eq:shift}
\{z + i\mid i \in I\} \cup J = \mathbb{Z}_d\enspace.
\end{equation}
Both building sets $I$ and $J$ and searching for $z$ satisfying \eqref{eq:shift} can be done in $O(d^2)$ time, but since $d\le s'\le n^{0.45}$, we have that $d^2\le n^{0.9}$ so the time is sublinear in $n$. By \cite[Theorem~2, Case~2]{Berlinkov:preprint}, a shift $z$ verifying \eqref{eq:shift} exists with probability $O(\frac1n)$, and if this happens, we call \emph{SynchSlow}. Otherwise, we proceed to the next pair of cycles. Since the total number of pairs of cycles does not exceed $25 \ln{n}^2$, the total time spend on Step~7.3 is $O(n^{0.9} \ln{n}^2)=o(n)$. If \emph{SynchSlow} has not been invoked for any pair of cycles, our algorithm returns \emph{``true''}, that is, the automaton $\mathcal{A}$ is synchronizing. 

\smallskip

We have completed the description of the algorithm for DFAs with two input letters; let us call it the \emph{binary algorithm}. The extension to automata with $k>2$ input letters is fairly straightforward. Let $\mathcal{A}=(Q,\Sigma,\delta)$, where $\Sigma=\{a_1,a_2,\dots,a_k\}$, $k>2$. We first run the binary algorithm for the automaton $(Q,\{a_1,a_2\},\delta_{1,2})$, where $\delta_{1,2}$ is the restriction of $\delta$ to $Q\times\{a_1,a_2\}$. If the binary algorithm returns \emph{``true''} without calling \emph{SynchSlow}, then, clearly, $\mathcal{A}$ is synchronizing, and we return \emph{``true''} and stop. If the binary algorithm returns \emph{``false''} or if the necessity of calling \emph{SynchSlow} occurs, we apply the same procedure to the automaton $(Q,\{a_3,a_4\},\delta_{3,4})$, where $\delta_{3,4}$ is the restriction of $\delta$ to $Q\times\{a_3,a_4\}$, and so on. On the final step of the procedure, the binary algorithm is invoked for the automaton $(Q,\{a_{t-1},a_t\},\delta_{t-1, t})$, where $t=2\lfloor k/2\rfloor$ is the nearest even number less or equal than $k$. Again, if the binary algorithm returns \emph{``true''} without calling \emph{SynchSlow}, then $\mathcal{A}$ is synchronizing, and we return \emph{``true''} and stop. In all other cases, we call \emph{SynchSlow}. 

The probability that the binary algorithm returns \emph{``true''} for none of the automata $(Q,\{a_1,a_2\},\delta_{1,2}),\dots,(Q,\{a_{t-1},a_{t}\},\delta_{t-1,t})$ is $O(1/n^{\lfloor k/2 \rfloor})$, and the described procedure clearly takes $O(n)$ time for each fixed $k$.

We conclude this section with a comment on the nature of modifications made in our version of Berlinkov's algorithm. All the modifications consisted in 1)~omitting some of the conditions utilized in~\cite{Berlinkov:preprint} and 2)~checking a different condition instead. For instance, in Step~5 we omit the verification of whether each of the sets $Z(a_1)$ and $Z(a_2)$ contains at least $\lceil n^{0.4}\rceil$ pairs; instead, we check whether or not each of the graphs $\Gamma_a$ is connected in Step~6. The point is that, here and in all similar cases, it was the ``new'' condition that was implicitly used in~\cite{Berlinkov:preprint} while the role of the ``old'' condition was to ensure that the ``new'' one holds with high probability. (For instance, in the example just mentioned $Z(a_1)$ and $Z(a_2)$ serve as the edge sets for the graphs $\Gamma_{a_1}$ and respectively $\Gamma_{a_2}$; clearly, a graph in which the number of edges is much larger than the number of vertices is connected with high probability.) Therefore checking the ``new'' condition instead or the ``old'' one straightens the algorithm and decreases the probability of invoking \emph{SynchSlow}. As our experiments show, this has radically improved the algorithm's performance.

\section{Computational Experiments}

For brevity, we refer to the algorithm presented in the previous section as the \emph{main algorithm}. The algorithm was implemented in C++11. Compilation and assembly were made in Microsoft Visual Studio 2013 IDE (the compiler version MSVC 18.0.31101.0). All experiments were performed on a desktop PC with Intel Core i7-4770K (3.5GHz) CPU and 16Gb RAM. Source code can be found under the following link:\\ \url{https://github.com/birneAgeev/AutomataSynchronizationChecker}

In order to compare the main algorithm with \emph{SynchSlow}, we first experimented with its \emph{linear part} that returns \emph{fail} in the case of failure of any of the tests, instead of invoking \emph{SynchSlow}. Table~\ref{tab:avgtime} and Fig.~\ref{fig:avgtime} present the experimental results. For each combination of state/alphabet sizes, the average working time was computed from 1000 runs of the linear part on randomly generated automata. The graph in Fig.~\ref{fig:avgtime} confirms that the working time of our implementation of the linear part of the main algorithm indeed grows linearly with the number of states.
\begin{table}
\caption{The average working time (in seconds) of the linear part of the main algorithm for DFA with $n$ states and $k$ letters}
\centering
\begin{tabular}{l|c|c|c|c|c}
\hline\noalign{\smallskip}
&\quad $n=100$ \quad & \quad $n=1000$\quad  &\quad  $n=5000$\quad  &\quad  $n=10000$ \quad & \quad $n=100000$\\
\noalign{\smallskip}
\hline
\noalign{\smallskip}
  $k=2$ & 0.00013 & 0.00112 & 0.00554 & 0.0114 & 0.144\\
 $k=10$ & 0.00014 & 0.00113 & 0.00560 & 0.0115 & 0.159\\
$k=100$ \quad & 0.00014 & 0.00117 & 0.00564 & 0.0119 & 0.169\\
\hline
\end{tabular}
\label{tab:avgtime}
\end{table}

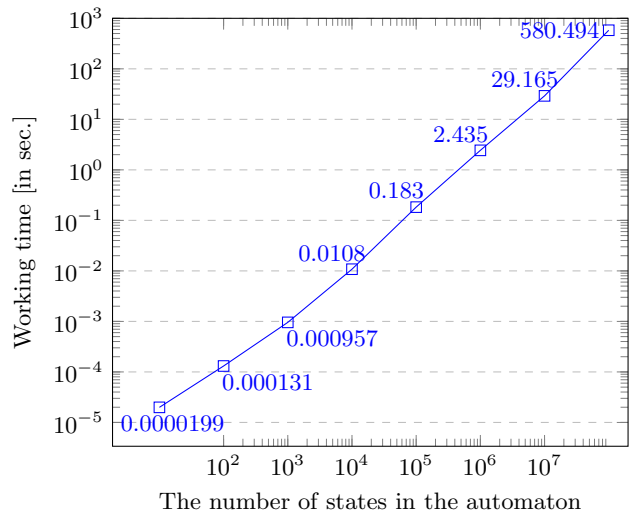
\begin{figure}[h]
\centering
\begin{tikzpicture}
\begin{axis}[
    xlabel={The number of states in the automaton},
    ylabel={Working time [in sec.]},
    xmax=200000000,
    ymax=1000.0,
    xtick={0,100,1000,10000,100000,1000000,10000000},
    ytick={0,0.00001,0.0001,0.001,0.01,0.1,1,10,100,1000},
    ymajorgrids=true,
    grid style=dashed,
    xmode=log,
    ymode=log
]

\addplot[
    color=blue,
    mark=square,
    visualization depends on=\thisrow{alignment} \as \alignment,
    nodes near coords, 
    point meta=explicit symbolic, 
    every node near coord/.style={anchor=\alignment}
    ]
    table [
     meta index=2 
     ] {
    x         y         label       alignment
    10        0.0000199 0.0000199   130
    100       0.000131  0.000131    160
    1000      0.000957  0.000957    160
    10000     0.0108    0.0108      -40
    100000    0.183     0.183       -40
    1000000   2.435     2.435       -40
    10000000  29.165    29.165      -40
    100000000 580.494   580.494     0
    };

\end{axis}
\end{tikzpicture}
\caption{The average working time of the linear part of the main algorithm for DFA with 2 input letters}
\label{fig:avgtime}
\end{figure}

We have used our experimental results to estimate the ratio of non-synchro\-niz\-ing automata amongst all automata with $n$ states and 2 input letters. According to Berlinkov's result~\cite[Theorem~1]{Berlinkov:preprint}, this ratio behaves as $\Theta(\frac{1}{n})$, in other words, the ratio of non-synchronizing automata amongst all automata with $n$ states and 2 input letters can be expressed as $\frac{c_n}n$ where the sequence $\{c_n\}$ tends to a non-zero limit as $n\to\infty$. Clearly, the ratio $\frac{f_n}n$ of failures of the linear part of the main algorithm gives an over-estimation of the ratio $\frac{c_n}n$. Hoeffding's inequality \cite{hoeffding} allows us to determine the number of runs sufficient to give statistically accurate bounds for the quantity $f_n$. Denote by $f_n(r)$ the number of automata among $r$ randomly chosen automata with $n$ states and 2 input letters for which the linear part of the main algorithm fails.

\begin{proposition}
\label{prop:estimation}
Let $\varepsilon>0$ and $0<p_0<1$. The number $t$ such that, with the probability at least $p_0$, the deviation of $f_n$ from the value $\frac{f_n(t n)}{t}$ is less than $\varepsilon$, can be computed from the inequality  $t\ge-\frac{n}{2\varepsilon^2}\ln{\frac{1-p_0}{2}}$.
\end{proposition}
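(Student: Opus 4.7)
The proof is essentially a direct application of Hoeffding's inequality, with the only subtlety being the correct bookkeeping of factors of $n$ when passing between $f_n$ (which scales like the ratio times $n$) and the empirical proportion of failures in the sample.

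My plan is to start by reinterpreting the setup probabilistically. Fix $n$ and let $P_n$ denote the true probability that the linear part of the main algorithm fails on a uniformly random automaton with $n$ states and $2$ letters; thus $P_n = f_n/n$ in the notation of the paragraph preceding the proposition. A run of the experiment on $r = tn$ independently and uniformly generated automata yields $r$ i.i.d.\ Bernoulli$(P_n)$ indicator variables $X_1, \dots, X_r$, with $f_n(r) = X_1 + \cdots + X_r$ and $\mathbb{E}[X_i] = P_n$. Since each $X_i$ takes values in $[0,1]$, Hoeffding's inequality applies directly to the sample mean $\overline{X} = f_n(r)/r$, giving
\[
\Pr\!\Bigl[\,\bigl|\overline{X} - P_n\bigr| \ge \delta\,\Bigr] \;\le\; 2\exp\!\bigl(-2r\delta^2\bigr)
\]
for every $\delta > 0$.

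Next I would translate the statement ``the deviation of $f_n$ from $f_n(tn)/t$ is less than $\varepsilon$'' into a statement about $\overline{X}$. Multiplying both sides by $1/n$ shows that
\[
\Bigl|\,f_n - \tfrac{f_n(tn)}{t}\,\Bigr| < \varepsilon \quad\Longleftrightarrow\quad \Bigl|\,P_n - \tfrac{f_n(tn)}{tn}\,\Bigr| < \tfrac{\varepsilon}{n},
\]
so it is natural to invoke the Hoeffding bound with $r = tn$ and $\delta = \varepsilon/n$. This gives a failure probability at most $2\exp(-2tn\cdot\varepsilon^2/n^2) = 2\exp(-2t\varepsilon^2/n)$.

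Finally, I would require this tail probability to be at most $1 - p_0$ and solve for $t$. The inequality $2\exp(-2t\varepsilon^2/n)\le 1-p_0$ rearranges, after taking logarithms and noting that $\ln\!\frac{1-p_0}{2}$ is negative since $0<p_0<1$, to
\[
t \;\ge\; -\frac{n}{2\varepsilon^2}\,\ln\!\frac{1-p_0}{2},
\]
which is exactly the bound claimed in the proposition. There is no substantive obstacle in this argument; the only place where one could slip is the rescaling between $f_n$ and $P_n$, so that factor of $n$ in the Hoeffding parameter $\delta = \varepsilon/n$ is the single step that deserves explicit care.
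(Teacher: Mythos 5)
Your proof is correct and follows exactly the same route as the paper's: model the failure event as a Bernoulli random variable with parameter $f_n/n$, apply Hoeffding's inequality to the sample of $tn$ runs, rescale the deviation by the factor $n$ (your $\delta=\varepsilon/n$ corresponds to the paper's $\gamma=\varepsilon/n$), and solve $2e^{-2t\varepsilon^2/n}\le 1-p_0$ for $t$. The only difference is cosmetic: you introduce explicit names $P_n$ and $\overline{X}$ for the true probability and the sample mean, which makes the rescaling step a bit more transparent than in the paper.
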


\begin{proof}
We may consider the failure of the linear part of the main algorithm on a random automaton with $n$ states as a Bernoulli random variable with the probability $\frac{f_n}{n}$. Then Hoeffding's inequality applies, yielding
\begin{equation}
\label{eq:hoeffding}
P\big[(\frac{f_n}{n} - \gamma)t n \le f_n(t n) \le (\frac{f_n}{n} + \gamma)t n\big] \ge 1 - 2e^{-2 \gamma^2 tn}
\end{equation}
for every $\gamma>0$ and every integer $t$. Transform the expression for probability: 
\begin{multline*}
P\big[(\frac{f_n}{n} - \gamma)t n \le f_n(t n) \le (\frac{f_n}{n} + \gamma)t n\big] =\\ 
P\big[f_n - \gamma n \le \frac{f_n(t n)}{t} \le f_n + \gamma n\big] = P\big[\abs{\frac{f_n(t n)}{t} - f_n} \le \gamma n\big]\enspace.
\end{multline*} 
Now, we choose $\gamma=\frac{\varepsilon}n$ and require that $P\big[\abs{\frac{f_n(t n)}{t} - f_n} \le \varepsilon] \ge p_0$. Then we conclude from \eqref{eq:hoeffding} that $p_0 \le 1 - 2e^{\frac{-2\varepsilon^2t}{n}}$. After simple transformations, we get $t \geq -\frac{n}{2 \varepsilon^2} \ln{\frac{1-p_0}{2}}$.\qed
\end{proof}

Using Proposition~\ref{prop:estimation}, we see that, to calculate $f_n$ up to $\varepsilon=0.1$ with probability $p_0=0.99$, it suffices to run the linear part of the main algorithm $-\frac{n^2}{0.02} \ln{0.005} < 1060n^2$ times. The results of our calculations are collected in Table~\ref{tab:cn}. (For $n \geq 1000$, we used $\varepsilon=1$, which reduces the number of runs to $3n^2$, to make the calculations feasible.) As we mentioned above, the algorithm and all estimations are stated in the scope of the simple random model of an automaton, namely, for every state $q$ and letter $a$, $\delta(q, a)$ is chosen uniformly at random from $Q$. Nevertheless, other random models could be considered for a better understanding of limits on the use of the algorithm. We have also performed calculations of $f_n$ in the nonisomorphic model, in which automata are selected randomly from the set of all nonisomorphic automata with $n$ states and $k$ letters. For a generation of input data for the algorithm, the FAdo tool \cite{almeida2009fado} has been used.

\begin{table}
\caption{The estimation of $f_n$}
\centering
\begin{tabular}{c|c|c|c|c|c|c|c}
\hline\noalign{\smallskip}
$n$ \quad & \quad $5$ \quad & \quad $10$ \quad & \quad $20$ \quad& \quad$50$ \quad&\quad $100$ \quad&\quad $1000$ \quad& \quad$10000$\\
\noalign{\smallskip}
\hline
\noalign{\smallskip}
  $f_n$ in the simple uniform model \quad & 3.57 & 4.6 & 4.8 & 4.33 & 3.79 & 5.09 & 5.32\\
  $f_n$ in the nonisomorphic model \quad & 3.10 & 3.31 & 3.01 & 2.7 & 2.21 & --- & ---\\
\hline
\end{tabular}
\label{tab:cn}
\end{table}

Table \ref{tab:cn} shows that $f_n$ weakly depends on $n$ and seems to tend to a constant. Recall that $f_n$ is an overestimation for $c_n$, and thus, our results indicate that the ratio of of non-synchronizing automata amongst all automata with $n$ states and 2 input letters is upper-bounded by $\frac{5}{n}$ for the both models. This is consistent with theoretical results from \cite{berl,Berlinkov:preprint}. So, we can state a conjecture, that the similar asympthotic for the probability of being synchronizable is true in the nonisomorphic model. It is worth noting that the original algorithm described in \cite{Berlinkov:preprint} has the estimation of $f_n$ close to $n$ for all $n$ up to several thousand.

Estimation of the time complexity of the main algorithm is a nontrivial task because the quadratic algorithm \emph{SynchSlow} can be invoked, and for large $n$, e.g. $n=5000$, \emph{SynchSlow} requires too much time. Therefore we use our estimate of the value $f_n$ to find the running time of the main algorithm on $n$-state automata.

\begin{enumerate}
\item Estimate the constant $f_n$ as described above.
\item Calculate the total running time $t_{lin}$ of the linear part of the main algorithm on $n$ random automata with $n$ states.
\item Calculate the running time $t_{quad}$ of \emph{SynchSlow} on a single automaton with $n$ states (taking the average time of several runs on random automata).
\item Use $\dfrac{t_{lin} + f_nt_{quad}}{n + f_n}$ as an estimation for the average running time of the main algorithm on single automaton with $n$ states.
\end{enumerate}

From the data in Table \ref{tab:cn}, we may assume that $f_n \leq 6$ for $n \leq 10000$. Applying the above procedure, we get results presented in Table \ref{tab:maintime} and Fig.~\ref{fig:maintime}.

\begin{table}[t]
\caption{The average working time (in seconds) of the main algorithm for DFA with $n$ states and 2 input letters}
\centering
\begin{tabular}{c|c|c|c|c|c|c|c|c}
\hline\noalign{\smallskip}
$n$ &\quad 1000 \quad&\quad 2000 \quad&\quad 3000 \quad&\quad 4000 \quad&\quad 5000 \quad&\quad 7000 \quad&\quad 9000 \quad&\quad 10000\\
\noalign{\smallskip}
\hline
\noalign{\smallskip}
$\dfrac{t_{lin} + f_nt_{quad}}{n + f_n}$ \rule[-10pt]{0pt}{18pt} \quad& 0.002 & 0.005 & 0.008 & 0.010 & 0.014 & 0.021 & 0.027 & 0.031\\
\hline
\end{tabular}
\label{tab:maintime}
\end{table}

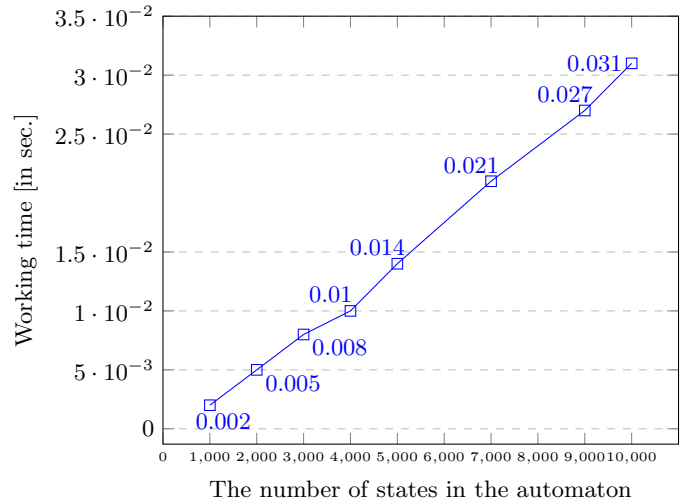
\begin{figure}
\centering
\begin{tikzpicture}
\begin{axis}[
    xlabel={The number of states in the automaton},
    ylabel={Working time [in sec.]},
    xmax=11000,
    ymax=0.035,
    xtick={0,1000,2000,3000,4000,5000,6000,7000,8000,9000,10000},
    ytick={0,0.005,0.01,0.015,0.2,0.025,0.03,0.035},
    ymajorgrids=true,
    grid style=dashed,
    scaled y ticks = false,
    scaled x ticks = false,
    every x tick label/.append style={font=\tiny}
]

\addplot[
    color=blue,
    mark=square,
    visualization depends on=\thisrow{alignment} \as \alignment,
    nodes near coords, 
    point meta=explicit symbolic, 
    every node near coord/.style={anchor=\alignment}
    ]
    table [
     meta index=2 
     ] {
    x         y     label    alignment
    1000      0.002 0.002    130
    2000      0.005 0.005    160
    3000      0.008 0.008    160
    4000      0.01  0.01     -40
    5000      0.014 0.014    -40
    7000      0.021 0.021    -40
    9000      0.027 0.027    -40
    10000     0.031 0.031     0
    };

\end{axis}
\end{tikzpicture}
\caption{The average working time of the main algorithm for DFA with 2 input letters}
\label{fig:maintime}
\end{figure}

Finally, we report about the comparison between the main algorithm and the quadratic algorithm \emph{SynchSlow}. Both algorithms were run $2650000$ times for all small $n$ in order to find minimal $n_0$ such that the average running time of \emph{SynchSlow} becomes greater than the average running time of the main algorithm for automata with at least $n_0$ states. The results of the comparison presented in Fig.~\ref{fig:compare} show that $n_0=31$.

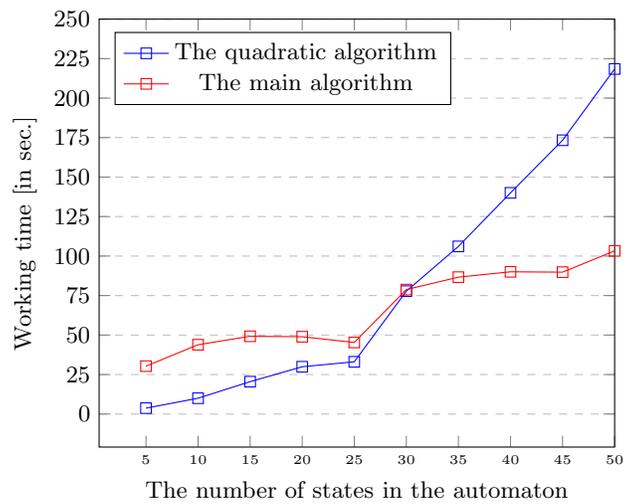
\begin{figure}
\centering
\begin{tikzpicture}
\begin{axis}[
    xlabel={The number of states in the automaton},
    ylabel={Working time [in sec.]},
    xmax=50,
    ymax=250,
    xtick={0,5,10,15,20,25,30,35,40,45,50},
    ytick={0,25,50,75,100,125,150,175,200,225,250},
    ymajorgrids=true,
    grid style=dashed,
    legend pos=north west,
    every x tick label/.append style={font=\tiny}
]

\addplot[
    color=blue,
    mark=square,
    ]
    coordinates {
        (5 , 3.726  )
		(10, 10.009 )
		(15, 20.496 )
		(20, 29.937 )
		(25, 33.075 )
		(30, 77.792 )
		(35, 106.104)
		(40, 140.015)
		(45, 173.314)
		(50, 218.424)
    };

\addplot[
    color=red,
    mark=square,
    ]
    coordinates {
        (5 , 30.321)
		(10, 43.879)
		(15, 49.222)
		(20, 48.935)
		(25, 45.319)
		(30, 78.779)
		(35, 86.649)
		(40, 90.032)
		(45, 89.806)
		(50, 103.28)
    };

\legend{The quadratic algorithm,The main algorithm}
\end{axis}
\end{tikzpicture}
\caption{The working time of the algorithms on $2650000$ automata}
\label{fig:compare}
\end{figure}

\noindent\textbf{Acknowledgements.} The author is grateful to his scientific advisor Professor Mikhail Volkov for helping with this article, to Mikhail Berlinkov for the explanation of his algorithm and to Mikhail Samoilenko for useful discussions.

\newpage

\bibliography{paper}

\def\Cerny{\v{C}ern\'y}
\begin{thebibliography}{10}
\providecommand{\url}[1]{\texttt{#1}}
\providecommand{\urlprefix}{URL }

\bibitem{almeida2009fado}
Almeida, A., Almeida, M., Alves, J., Moreira, N., Reis, R.: Fado and guitar:
  tools for automata manipulation and visualization. In: CIAA. pp. 65--74.
  Springer (2009)

\bibitem{berl}
Berlinkov, M.V.: On Two Algorithmic Problems about Synchronizing Automata, pp.
  61--67. Springer International Publishing, Cham (2014),
  \url{http://dx.doi.org/10.1007/978-3-319-09698-8_6}

\bibitem{Berlinkov:tree}
Berlinkov, M.V.: Highest trees of random mappings. CoRR  abs/1504.04532 (2015),
  \url{http://arxiv.org/abs/1504.04532}

\bibitem{Berlinkov:preprint}
Berlinkov, M.V.: On the Probability of Being Synchronizable, pp. 73--84.
  Springer International Publishing, Cham (2016),
  \url{http://dx.doi.org/10.1007/978-3-319-29221-2_7}

\bibitem{Cerny:1964}
\Cerny, J.: Pozn\'{a}mka k homog\'{e}nnym eksperimentom s kone\v{c}n\'{y}mi
  automatami. Matematicko-fyzikalny \v{C}asopis Slovenskej Akad\'emie Vied
  14(3),  208--216 (1964), (in Slovak)

\bibitem{CKK:2002}
Culik~II, K., Karhum\"aki, J., Kari, J.: A note on synchronized automata and
  {R}oad {C}oloring {P}roblem. Int. J. Found. Comput. Sci.  13,  459--471
  (2002)

\bibitem{hoeffding}
Hoeffding, W.: Probability inequalities for sums of bounded random variables.
  J. Amer. Stat. Assoc.  58(301),  13--30 (1963)

\bibitem{Liu:1963}
Liu, C.L.: Some memory aspects of finite automata. Technical Report 411,
  Research Lab.\ Electronics, Massachusetts Inst.\ Technology, Cambridge, MA
  (1963)

\bibitem{Tarjan:72}
Tarjan, R.E.: Depth-first search and linear graph algorithms. {SIAM} J. Comput.
   1(2),  146--160 (1972)

\bibitem{Volkov:2008}
Volkov, M.: Synchronizing automata and the \v{C}ern\'{y} conjecture. In:
  Mart\'{\i}n-Vide, C., Otto, F., Fernau, H. (eds.) Language and Automata
  Theory and Applications. Lect. Notes Comput. Sci., vol. 5196, pp. 11--27.
  Springer (2008)

\end{thebibliography}

\end{document}